\documentclass[conference,a4paper]{IEEEtran}

\usepackage{amsmath, amssymb, cite}
\usepackage{mathtools}
\usepackage{amsthm,multirow,color,amsfonts}
\usepackage{tabulary}
\usepackage{subfigure}
\usepackage{graphicx}
\usepackage{enumerate}



\pagenumbering{gobble}

\title{Optimizing The Spatial Content Caching Distribution for Device-to-Device Communications}

\author{
    \IEEEauthorblockN{Derya~Malak\IEEEauthorrefmark{1}, Mazin Al-Shalash\IEEEauthorrefmark{2} and Jeffrey~G.~Andrews\IEEEauthorrefmark{1}}
    \IEEEauthorblockA{\IEEEauthorrefmark{1}Department of Electrical and Computer Engineering \\The University of Texas at Austin, Austin, TX 78701, USA}
    \IEEEauthorblockA{\IEEEauthorrefmark{2}Huawei Technologies, Plano, TX 75075, USA
    \\ Email: deryamalak@utexas.edu, mshalash@huawei.com, jandrews@ece.utexas.edu}
    }

\newtheorem{theo}{Theorem}
\newtheorem{defi}{Definition}

\newtheorem{prop}{Proposition}

\DeclareMathOperator*{\Rdd}{R_{\rm D2D}}
\DeclareMathOperator*{\Rdds}{R_{\rm D2D}^2}
\DeclareMathOperator*{\Pois}{\rm Poisson}
\DeclareMathOperator*{\Phit}{P_{\rm Hit}}

\DeclareMathOperator*{\BM}{\rm BM}
\DeclareMathOperator*{\PPP}{\rm PPP}

\DeclareMathOperator*{\mhc}{\rm MHC}
\DeclareMathOperator*{\DD}{\rm D2D}

\DeclareMathOperator*{\Poisson}{\rm Poisson}


\begin{document}

\maketitle
\begin{abstract}
We study the optimal geographic content placement problem for device-to-device ($\DD$) networks in which the content popularity follows the Zipf law. We consider a $\DD$ caching model where the locations of the $\DD$ users (caches) are modeled by a Poisson point process ($\PPP$) and have limited communication range and finite storage.  Unlike most related work which assumes independent placement of content, and does not capture the locations of the users, we model the spatial properties of the network including spatial correlation in terms of the cached content. We propose two novel spatial correlation models, the exchangeable content model and a Mat\'{e}rn ($\mhc$) content placement model, and analyze and optimize the \emph{hit probability}, which is the probability of a given $\DD$ node finding a desired file at another node within its communication range.  We contrast these results to the independent placement model, and show that exchangeable placement performs worse. On the other hand, $\mhc$ placement yields a higher cache hit probability than independent placement for small cache sizes.
\end{abstract}

\maketitle


\section{Introduction}
\label{intro}

$\DD$ communication is a promising technique for enabling proximity-based applications and increased offloading from the heavily loaded cellular network, and is being actively standardized by 3GPP \cite{LinMag2014}.  Its efficacy however, requires nearby users to possess content that another user wants.  Therefore, intelligent caching of popular files is indispensable for $\DD$ to be successful.  Caching has been shown to provide increased spectral reuse and throughput gain in $\DD$-enabled networks \cite{Naderializadeh2014}, but the optimal way to spatially cache content using $\DD$ is unknown.  Intuitively, popular content should be seeded into the network in a way that maximizes the probability that a given $\DD$ device can find a desired file within its radio range. We explore this problem quantitively in this paper.
 
Several aspects of content caching have been studied. 
The gain offered by local caching is analyzed \cite{MaddahAli2013Journal}. Scaling laws with $\DD$ content caching are studied \cite{Ji2014}.
Optimal collaboration distance and scaling for the number of active $\DD$ links are investigated \cite{Golrezaei2014}. A distributed caching system using mobiles and deterministically placed helpers with low-rate backhaul is proposed \cite{Shanmugam2013}. Using ${\rm PPP}$s to model the user locations, optimal geographic content placement 
for various wireless network scenarios are studied \cite{Blaszczyszyn2014}. Maximum probability that the typical user finds the content in one of its nearby base stations is evaluated using the coverage number distribution \cite{Keeler2013}.

Temporal caching models have also been studied \cite{Che2002}, e.g., least recently used, least-frequently used and most recently used cache update algorithms. However, to the best of authors' knowledge, \cite{Blaszczyszyn2014} is the only work to propose a spatial caching model and there is no spatially joint content placement strategy in the literature. We aim to maximize the cache hit probability for a $\DD$ network where the spatial distribution of nodes matters, which can be exploited for an efficient use of caches. 

We consider a $\DD$ caching model in which user locations are modeled by the Poisson point process ($\PPP$) as in \cite{Lin2013}, and users have limited communication range and finite storage. $\DD$ users are primarily served by each other if  the desired content is cached in a nearby user. Otherwise, they are served by the cellular network. We aim to optimize the cache hit probability, i.e., the probability that a user can get the desired content from one of the $\DD$ users within its range. 

We propose 2 different strategies to maximize the $\DD$ cache hit probability: (i) independent content placement 
where there is no spatial correlation among users and (ii) correlated placement strategies that enable spatial diversity, namely a spatially exchangeable placement model and a Mat\'{e}rn hard core ($\mhc$) model to prioritize the caches for content placement. In the $\mhc$ model, the caches storing a particular file are never closer to each other than some given distance, so neighboring users are less likely to cache redundant content.   We show that exchangeable placement yields positively correlated spatial distribution of content, and is suboptimal in terms of the cache hit probability compared to independent placement. On the other hand, $\mhc$ placement yields a negatively correlated spatial content distribution, and a higher cache hit probability than the independent placement in the small cache size regime.

\section{System Model}
\label{model}
The locations of the $\DD$ users are modeled by the $\PPP$ $\Phi$ with density $\lambda$. We assume that there are $M$ files in total in the network and each user has a cache of the same finite size $N<M$. Depending on its cache state, each user makes requests for new files based on a general popularity distribution over the set of the files. The popularity of such requests is modeled by the Zipf distribution, which has pmf $p_r(n)=\frac{1}{n^{\gamma_r}}/\sum_{m=1}^M{\frac{1}{m^{\gamma_r}}}$, for $n=1,\hdots, M$, where $\gamma_r$ is the Zipf exponent that determines the skewness of the distribution.

$\DD$ 
users can only communicate within a finite range, which we call $\DD$ radius and denote it by $\Rdd$. A request needs to be fulfilled by the $\DD$ users within the $\DD$ radius. Otherwise, the $\DD$ user has to be served by the cellular network. The coverage process of the proposed model can be represented by a Boolean model as described next.
 
\begin{defi}
The Boolean model (BM) is based on a PPP, whose points are also called germs, and on an independent sequence of iid compact sets called grains, defined as a model driven by an independently marked PPP on $\mathbb{R}^2$ \cite{BaccelliBook1}. 
\end{defi}

Consider a given realization $\phi=\{x_i\}\subset \mathbb{R}^2$ of the $\PPP$ $\Phi$. We can think of $\phi$ as a counting measure or a point measure $\phi=\sum\nolimits_{i}{\delta_{x_i}}, \, x_i\in \mathbb{R}^2$, where $x_i$ denotes the coordinates of the $i^{\rm th}$ user and $\delta_x=\{0,1\}$ is the Dirac measure at $x$; for $A\subset \mathbb{R}^2$, $\delta_x(A)=1$ if $x\in A$ and $\delta_x(A)=0$ if $x\notin A$. Consequently, $\phi(A)$ gives the number of points of $\phi$ in $A$.

Our model is a simple $\BM$ where $x_i$'s denote the germs and $B_i(\Rdd)$ -a closed ball of radius $\Rdd$ centered at $x_i$- denote the grains. 
Then, the coverage process is driven by the following independently marked $\PPP$: 
$\tilde{\Phi}=\sum\nolimits_{i}{\delta_{(x_i,B_i(\Rdd))}}$.

The $\BM$ is given by the union $V_{\rm BM}=\bigcup_i{(x_i+B_0(\Rdd))}$ that models the coverage process of the $\DD$ transmitters. 

\begin{defi}{\bf Volume fraction \cite{BaccelliBook1}.} 
Since our model is translation invariant, volume fraction can 
be expressed as the probability that 
the origin is covered by 
$B_0(\Rdd)$ given by 
\begin{align}
\label{volfrac}
p=\mathbb{P}(0\in B_0(\Rdd))=1-\exp(-\lambda \pi\Rdds).
\end{align}
\end{defi}

We propose different strategies to serve the $\DD$ requests that maximize the cache hit. Assuming a transmitter receives one request at a time and multiple transmitters can potentially serve a request, the selection of the active transmitters depends on the caching strategies detailed in Sects. \ref{cachemodel}, \ref{exchangeable} and \ref{hardcore}.

\section{Cache Hit Probability}
\label{hitprobability}
To characterize the successful transmission probability, one needs to know the number of users that a typical node can connect to, i.e., the coverage number. Exploiting the properties of the $\PPP$, the distribution of the number of transmitters covering the typical receiver that requests file $m$ is given by 
\begin{eqnarray}
\label{pmf}
\mathcal{N}_m\sim\Pois(\lambda_m\pi {\Rdds}). 
\end{eqnarray}

Assume that the files are cached at the $\DD$ users identically and independently of each other and let $p_c(\cdot)$ be the caching probability. Let $Y_m$ be the indicator random variable that 
takes the value $1$ if file $m$ is available in the cache and $0$ otherwise. Thus, any cache satisfies the condition $\sum\nolimits_{m=1}^M{Y_m}\leq N$, i.e., $Y_m$'s are inherently dependent. However, for tractability reasons and due to the independent content placement assumption, we take the expectation of this relation and obtain our cache constraint: $\sum\nolimits_{m=1}^M{\mathbb{P}(Y_m=1)}=\sum\nolimits_{m=1}^M{p_c(m)}\leq N$.

The maximum total cache hit probability, i.e., the probability that the typical user finds the content in one of the $\DD$ users it is covered by, can be evaluated by solving 
\begin{eqnarray}
\begin{aligned}
\label{eq:hitprob-opt}
\max_{p_c} &\,\,\, \Phit\\
\textrm{s.t.}
& \quad \sum\nolimits_{m=1}^M{p_c(m)}\leq N,
\end{aligned}
\end{eqnarray}
where $\Phit=1-\sum\limits_{m=1}^M{p_r(m)\sum\limits_{k=0}^{\infty}{\mathbb{P}(\mathcal{N}_m=k)(1-p_c(m))^k}}$.

Optimal content placement is a binary problem satisfying $\sum\nolimits_{m=1}^M{Y_m}=N$. However, as noted above, the constraint in (\ref{eq:hitprob-opt}) is based on the average values of $Y_m$'s, which yields a relaxed content placement. Later, we show there are feasible solutions to the relaxed problem filling up all the cache slots.

The key step in evaluating (\ref{eq:hitprob-opt}) is to determine the coverage number distribution, i.e., $\mathbb{P}(\mathcal{N}_m=k)$. We can optimize $\Phit$ by using the Lagrangian technique as follows
\begin{eqnarray}
\mathcal{L}(\mu)
=1-\sum\nolimits_{m=1}^M{p_r(m)}\sum\nolimits_{k=0}^{\infty}{\mathbb{P}(\mathcal{N}_m=k)(1-p_c(m))^k}\nonumber\\
-\mu \big(\sum\nolimits_{m=1}^M{p_c(m)}-N\big).\nonumber
\end{eqnarray}
Taking the derivative of $\mathcal{L}(\mu)$ with respect to $p_c(m)$ and evaluating at $\mu=\mu^*$, we have $\frac{d \mathcal{L}(\mu)}{d p_c(m)}\vert_{\mu=\mu^*}=0$, for which there exists a feasible solution $p^*_c(m)$ that satisfies
\begin{eqnarray}
\label{numericalInvariant}
p_r(m)\sum\nolimits_{k=1}^{\infty}{k\mathbb{P}(\mathcal{N}_m=k){(1-p^*_{c}(m))}^{k-1}}=\mu^*,\nonumber\\ p_r(m)\mathbb{P}(\mathcal{N}_m=1)\leq\mu^*\leq p_r(m)\mathbb{E}[\mathcal{N}_m].
\end{eqnarray}
Similar to the approach in \cite{Blaszczyszyn2014}, we can use bisection method\footnote{The bisection method is a numerical root-finding method that repeatedly bisects an interval and selects a subinterval in which a root must lie. The algorithm stops when the change in the root is smaller than a chosen $\varepsilon> 0$.} and numerically solve (\ref{numericalInvariant}) to find the $p^*_c(m)$ values. We initialize the bisection method by setting $\mu$ such that $\mu\in[\mu_{\min},\mu_{\max}]$, where $\mu_{\max}=p_r(N/c_b)\mathbb{P}(\mathcal{N}_{N/c_b}=1)$ assuming $p_c(m)=1$ for $m\leq N/c_b$, hence $\mu^*\leq \mu_{\max}$, and $\mu_{\min}=p_r(c_bN)\mathbb{E}[\mathcal{N}_{c_bN}]$ assuming $p_c(m)=0$ for $m\geq c_bN$, hence $\mu^*\geq \mu_{\min}$. Here, $c_b$ is a constant integer parameter appropriately adjusted and $N$ is divisible by $c_b$ and $c_bN\leq M$.

Using the coverage number pmf (\ref{pmf}), we can rewrite (\ref{numericalInvariant}) as
\begin{eqnarray}
\mu^*=p_r(m)\sum\limits_{k=1}^{\infty}k e^{-\lambda_m\pi {\Rdds}}
\frac{(\lambda_m\pi {\Rdds})^k}{k!}{(1-p^*_{c}(m))}^{k-1}\nonumber\\
=p_r(m)\lambda_m\pi {\Rdds}\exp(-p^*_{c}(m)\lambda_m\pi {\Rdds}),\nonumber
\end{eqnarray}
which yields for $p_r(m)\mathbb{P}(\mathcal{N}_m=1)\leq\mu^*\leq p_r(m)\mathbb{E}[\mathcal{N}_m]$:
\begin{align}
\label{optimalcachingpmf}
p^*_{c}(m)=\frac{1}{\lambda_m\pi {\Rdds}}\log\Big(\frac{p_r(m)\lambda_m\pi {\Rdds}}{\mu^*}\Big).
\end{align}

\section{Independent Cache Design}
\label{cachemodel}
Given that each cache can store $N<M$ files\footnote{Swapping the contents within a cache does not change cache's state.}, our objective is to determine the number of files $L$ that should be stored in the cache with probability 1, and the maximum number of distinct files $K$ that can be stored in the cache as a function of the important design parameters, e.g., $\Rdd$, $\lambda_m$'s and $N$. Using the optimal solution $p^*_{c}(m)$ in (\ref{optimalcachingpmf}), we can deduce that
\begin{eqnarray}
\label{pcoptimal}
p^*_{c}(m)=\begin{cases}
1\quad \mu^*\leq p_r(m)\mathbb{P}(\mathcal{N}_m=1) \\
\frac{1}{\lambda_m\pi {\Rdds}}\log\big(\frac{p_r(m)\lambda_m\pi {\Rdds}}{\mu^*}\big) \, \mu^* \in \mathcal{M}_m\\
0 \quad \mu^*\geq p_r(m)\mathbb{E}[\mathcal{N}_m]
\end{cases},
\end{eqnarray}
where $\mathbb{P}(\mathcal{N}_m=1)=e^{-\lambda_m\pi {\Rdds}}(\lambda_m\pi {\Rdds})$, $\mathbb{E}[\mathcal{N}_m]=\lambda_m\pi\Rdds$ and $\mathcal{M}_m$ is a set such that for any $\mu^* \in \mathcal{M}_m$, it is satisfied that $p_r(m)\mathbb{P}(\mathcal{N}_m=1)\leq\mu^*\leq p_r(m)\mathbb{E}[\mathcal{N}_m]$. Incorporating the finite cache size constraint to (\ref{pcoptimal}), we can rewrite $\sum\nolimits_{m=1}^M{p_c(m)}$ as follows:
\begin{eqnarray}
\label{KL1}
L-1+\sum\limits_{m=L}^{K}{\frac{1}{\lambda_m\pi {\Rdds}}\log\Big(\frac{p_r(m)\lambda_m\pi {\Rdds}}{\mu^*}\Big)}=N.
\end{eqnarray}
Using the boundary conditions for $\mu^*$, we have 
\begin{eqnarray}
p_r(K)\lambda_K\pi\Rdds\leq \mu^*\leq p_r(L)e^{-\lambda_L\pi {\Rdds}}(\lambda_L\pi {\Rdds}),
\end{eqnarray}
where the relation between $L$ and $K$ can be found as
\begin{eqnarray}
\label{function1}
p_r(K)^2\leq p_r(L)^2\exp(-\lambda_L\pi {\Rdds}),
\end{eqnarray}
which follows from $\lambda_m=\lambda p_r(m)$, i.e., the density of the transmitting users is proportional to the density of the requests.

Using (\ref{pcoptimal}), for any $L\leq m\leq K$, the optimal solution is 
\begin{eqnarray}
\label{KL2}
p^*_{c}(m)=\frac{\sum\limits_{j=1}^{M}({2\gamma_r}/{j^{\gamma_r}})}{\lambda\pi \Rdds}\log\Big(\frac{K}{m}\Big)m^{\gamma_r}
+\Big(\frac{m}{K}\Big)^{\gamma_r}p_c(K).
\end{eqnarray}

From (\ref{KL1}) and (\ref{KL2}), we obtain the following relation:
\begin{multline}
\label{function2}
N-L+1
=\left[\frac{\sum\nolimits_{j=1}^{M}({2\gamma_r}/{j^{\gamma_r}})}{\lambda\pi \Rdds}\log(K)+\frac{p_c(K)}{K^{\gamma_r}}\right]\times \\
\sum\nolimits_{m=L}^{K}{m^{\gamma_r}}
-\frac{\sum\nolimits_{j=1}^{M}({2\gamma_r}/{j^{\gamma_r}})}{\lambda\pi \Rdds}\sum\nolimits_{m=L}^{K}{\log(m)m^{\gamma_r}}.
\end{multline}

Applying (\ref{function1}) with equality and from (\ref{function2}), we uniquely determine $L$ and $K$ that approximate the optimal content placement pmf in (\ref{pcoptimal}) as the following linear model:
\begin{align}
\label{pcapproximate}
p^{\rm Lin}_{c}(m)=\begin{cases}
\min\{1,1-\frac{m-L}{K-L}\} \quad 1\leq m\leq K\\
0\quad m>K
\end{cases},
\end{align}
which is a good approximation as shown in Sect. \ref{comparisonindependentmhc}.

\section{A Spatially Exchangeable Cache Model}
\label{exchangeable}
For an ordered set of $n$ transmitters covering a receiver with desired content $m$, the binary sequence $Y_{m_1}, Y_{m_2}, \dots, Y_{m_n}$ denotes the availability of the content. We assume the sequence $Y_{m_1}, Y_{m_2}, \dots, Y_{m_n}$ is \textit{exchangeable} in the spatial domain. 
\begin{defi}
An exchangeable sequence $Y_1, Y_2, Y_3, \dots$ of random variables is such that for any finite permutation $r$ of the indices $1, 2, 3, \dots$, the joint probability distribution of the permuted sequence $Y_{r(1)}, Y_{r(2)}, Y_{r(3)}, \dots$ is the same as the joint probability distribution of the original sequence.
\end{defi}

\begin{theo}\label{deFinetti}
{\bf de Finetti's theorem.}  A binary sequence $Y_1, \hdots, Y_n, \hdots$ is exchangeable if and only if there exists a distribution function $F$ on $[0, 1]$ such that for all $n$ $p(y_1, \hdots, y_n)=\int\nolimits_{0}^{1}{ \theta^{t_n}(1-\theta)^{n-t_n}\, \mathrm{d}F(\theta)}$, where $p(y_1, \hdots, y_n)=\mathbb{P}(Y_1=y_1, \hdots, Y_n=y_n)$ and $t_n=\sum\nolimits_{i=1}^n{y_i}$. It further holds that $F$ is the distribution function of the limiting frequency, i.e., if $X=\lim_{n\to\infty} \sum\nolimits_i{Y_i/n}$, then $\mathbb{P}(X\leq x)=F(x)$ and by conditioning with $X=\theta$, we obtain
\begin{eqnarray}
\mathbb{P}(Y_1 = y_1, \dots, Y_n = y_n \vert X = \theta) = \theta^{t_n}(1-\theta)^{n-t_n}.
\end{eqnarray}
\end{theo}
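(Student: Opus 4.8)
The "if" direction is immediate: if $p(y_1,\dots,y_n)=\int_0^1 \theta^{t_n}(1-\theta)^{n-t_n}\,\mathrm{d}F(\theta)$, then the right-hand side depends on $(y_1,\dots,y_n)$ only through $t_n=\sum_i y_i$, which is invariant under any permutation of the indices; hence the sequence is exchangeable. All the content is in the converse, and the plan is the classical finite-population (urn) argument followed by a compactness passage to the limit.

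First I would \emph{embed and condition}. Fix $n$ and a word $(y_1,\dots,y_n)$ with $t_n$ ones, and for $N\ge n$ set $S_N=\sum_{i=1}^N Y_i$. Exchangeability makes every arrangement of the $S_N$ ones among the first $N$ positions equally likely, so conditioned on $\{S_N=k\}$ the first $n$ coordinates are distributed as a sample without replacement from an urn with $k$ ones and $N-k$ zeros; averaging over $k$,
\[
p(y_1,\dots,y_n)=\mathbb{E}\!\left[\frac{\binom{S_N}{t_n}\binom{N-S_N}{\,n-t_n}}{\binom{N}{n}}\right]=\int_0^1 g_N(s)\,\mathrm{d}F_N(s),
\]
where $F_N$ is the law of $S_N/N$ and $g_N$ is the explicit hypergeometric weight. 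A routine estimate on ratios of falling factorials gives, for fixed $n$, $g_N(s)\to s^{t_n}(1-s)^{n-t_n}$ uniformly in $s\in[0,1]$ as $N\to\infty$.

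Next I would \emph{pass to the limit}. The probability measures $\{F_N\}$ all live on the compact interval $[0,1]$, so by Helly's selection theorem some subsequence converges weakly, $F_{N_j}\Rightarrow F$. Since $g(s):=s^{t_n}(1-s)^{n-t_n}$ is bounded and continuous and $g_{N}\to g$ uniformly, $\int_0^1 g_{N_j}\,\mathrm{d}F_{N_j}\to\int_0^1 g\,\mathrm{d}F$, and the left side equals $p(y_1,\dots,y_n)$ for every $j$; this is the desired representation. Since the same weakly convergent subsequence $F_{N_j}$ can be used for every $n$, one $F$ serves for all $n$ at once.

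Finally I would \emph{identify} $F$. Choosing $y_1=\dots=y_k=1$ in the representation gives $\int_0^1\theta^k\,\mathrm{d}F(\theta)=\mathbb{P}(Y_1=\dots=Y_k=1)$ for all $k$, so every moment of $F$ is determined; a law on $[0,1]$ is determined by its moments, whence $F$ is unique. Under the representation the coordinates are, conditionally on the mixing value $\theta$, i.i.d.\ Bernoulli$(\theta)$, so the strong law of large numbers gives $S_n/n\to\theta$ a.s.\ conditionally; hence $X:=\lim_{n\to\infty}S_n/n$ exists a.s.\ with $\mathbb{P}(X\le x)=F(x)$, and conditioning on $\{X=\theta\}$ yields $\mathbb{P}(Y_1=y_1,\dots,Y_n=y_n\mid X=\theta)=\theta^{t_n}(1-\theta)^{n-t_n}$. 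The one genuinely delicate step is the passage to the limit: the mixing measure $F$ must be produced rather than assumed, which is exactly what the Helly compactness argument does, while the uniform approximation of $g_N$ by $g$ is what lets the limit be moved inside the integral; the urn computation and the moment/SLLN identification are routine once the representation is secured.
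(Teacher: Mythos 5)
The paper does not actually prove this statement: de Finetti's theorem is quoted as classical background for the exchangeable cache model, so there is no in-paper argument to compare against. Your proposal is the standard textbook proof (finite urn/hypergeometric approximation, Helly selection, Hausdorff moment determinacy, and the conditional SLLN), and its architecture is sound: the ``if'' direction via permutation-invariance of $t_n$, the extraction of the mixing measure as a weak subsequential limit of the laws of $S_N/N$, and the identification of $F$ with the law of the limiting frequency are all correct, including the observation that a single subsequence serves all $n$ simultaneously because $\int g_N\,\mathrm{d}F_N$ equals $p(y_1,\dots,y_n)$ for every $N\ge n$ and is therefore constant along the subsequence.

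One correction is needed in the urn step. The weight you wrote, $\binom{S_N}{t_n}\binom{N-S_N}{n-t_n}/\binom{N}{n}$, is the conditional probability that the first $n$ coordinates contain $t_n$ ones in \emph{some} order, not the probability of the specific word $(y_1,\dots,y_n)$; the latter is smaller by the factor $\binom{n}{t_n}$ and is cleanest written with falling factorials as $(S_N)_{t_n}(N-S_N)_{n-t_n}/(N)_n$, where $(a)_j=a(a-1)\cdots(a-j+1)$. With that normalization $g_N(s)\to s^{t_n}(1-s)^{n-t_n}$ uniformly as you claim; with yours the limit would carry a spurious $\binom{n}{t_n}$ and the representation would not match the statement. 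Everything downstream (moment uniqueness, the conditional Bernoulli structure, and the SLLN identification of $X$) is unaffected once this normalization is fixed.
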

The optimization formulation to maximize the cache hit for an exchangeable content placement strategy becomes
\begin{eqnarray}
\begin{aligned}
\label{eq:hitprob-exchangeable}
\max_{f_{X_m}} &\,\, 
1-\sum\nolimits_{m=1}^M{p_r(m)\sum\nolimits_{k=0}^{\infty}{\mathbb{P}(\mathcal{N}_m=k)P_{\rm miss}(m,k)}}\\
\textrm{s.t.}
& \quad \sum\nolimits_{m=1}^M{\mathbb{E}[X_m]}\leq N.
\end{aligned}
\end{eqnarray}
From Theorem \ref{deFinetti}, $P_{\rm miss}(m,k)=\int\nolimits_{0}^{1}{ (1-\theta)^kf_{X_m}(\theta)\, \mathrm{d}\theta}$ is the probability that $k$ caches cover a receiver, and none has file $m$, and $\mathbb{E}[X_m]=\int\nolimits_{0}^{1}{ \theta f_{X_m}(\theta)\, \mathrm{d}\theta}$ is the probability a cache contains file $m$. Hence, the objective in (\ref{eq:hitprob-exchangeable}) is equal to 
\begin{multline}
\label{eq:hitprob-exchangeable-simplified}
\Phit=\sum\nolimits_{m=1}^M p_r(m)\Big(\int\nolimits_{0}^{1} \big(1-\sum\nolimits_{k=0}^{\infty}\exp(-\lambda_m\pi {\Rdds})\\
((\lambda_m\pi {\Rdds})^k/{k!})(1-\theta)^k\Big)f_{X_m}(\theta)\, \mathrm{d}\theta\big)\\
=1-\sum\nolimits_{m=1}^M{p_r(m)\mathbb{E}[\exp(-\lambda_m\pi {\Rdds}X_m)]}.
\end{multline}

\begin{figure*}[t!]
\centering
\includegraphics[width=0.76\textwidth]{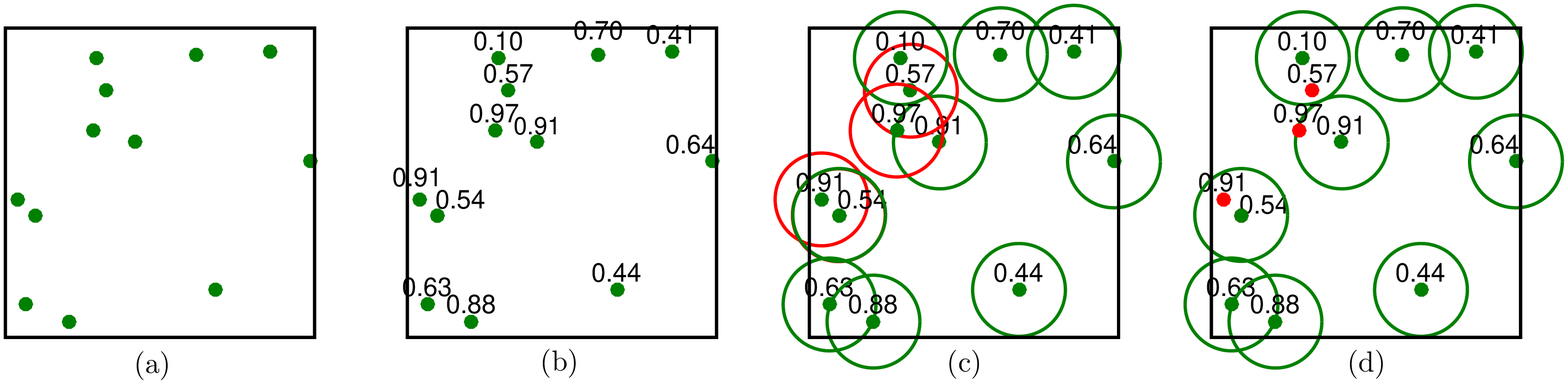}\\
\caption{\small{$\mhc$ p.p. realization: (a) Begin with a $\PPP$. (b) Associate a mark $\sim U[0,1]$ to each point independently. (c) A node $x$ is selected if it has the lowest mark among all the points in $B(x,\Rdd)$. (d) Set of selected points.}}
\label{Matern}
\end{figure*}

\begin{prop}
Any exchangeable placement strategy is worse than independent placement in terms of cache hit probability. 
\end{prop}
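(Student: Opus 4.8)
The plan is to work with the simplified form of the objective derived in (\ref{eq:hitprob-exchangeable-simplified}),
\[
\Phit \;=\; 1-\sum\nolimits_{m=1}^M p_r(m)\,\mathbb{E}\!\left[\exp\!\big(-\lambda_m\pi\Rdds X_m\big)\right],
\]
and to view independent placement as the degenerate member of the exchangeable family. First I would note that if file $m$ is placed i.i.d.\ across caches with probability $p_c(m)$, then the empirical frequency $X_m=\lim_{n\to\infty}\frac1n\sum_{i\le n}Y_{m_i}$ equals $p_c(m)$ almost surely by the strong law of large numbers; hence, by Theorem~\ref{deFinetti}, the de Finetti mixing distribution $F$ of the independent model is the point mass at $p_c(m)$, so (\ref{eq:hitprob-exchangeable-simplified}) reduces to the independent hit probability $1-\sum_m p_r(m)\exp(-\lambda_m\pi\Rdds\,\mathbb{E}[X_m])$, while the exchangeable constraint $\sum_m\mathbb{E}[X_m]\le N$ reduces to the independent one $\sum_m p_c(m)\le N$.

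The second step is convexity. The map $x\mapsto\exp(-\lambda_m\pi\Rdds x)$ is convex on $[0,1]$, so Jensen's inequality gives $\mathbb{E}[\exp(-\lambda_m\pi\Rdds X_m)]\ge\exp(-\lambda_m\pi\Rdds\,\mathbb{E}[X_m])$ for every admissible law $f_{X_m}$, with equality precisely when $X_m$ is almost surely constant. (Equivalently, one may apply Jensen term by term to the miss probabilities using convexity of $\theta\mapsto(1-\theta)^k$, obtaining $P_{\rm miss}(m,k)\ge(1-\mathbb{E}[X_m])^k$, and then resum the Poisson series; the conclusion is the same.) Multiplying by the nonnegative weights $p_r(m)$, summing over $m$, and subtracting from $1$ reverses the direction:
\[
1-\sum\nolimits_{m}p_r(m)\,\mathbb{E}\!\left[\exp(-\lambda_m\pi\Rdds X_m)\right]\;\le\;1-\sum\nolimits_{m}p_r(m)\exp\!\big(-\lambda_m\pi\Rdds\,\mathbb{E}[X_m]\big).
\]

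The third step lifts this pointwise bound to a bound between the two optimization problems. Given any feasible exchangeable strategy $\{f_{X_m}\}$ with $\sum_m\mathbb{E}[X_m]\le N$, define $p_c(m):=\mathbb{E}[X_m]$; this is a feasible independent placement, and by the last display its hit probability is at least that of $\{f_{X_m}\}$. Taking the supremum over all exchangeable strategies shows that the optimal exchangeable hit probability is no larger than the optimal independent one, with equality exactly when every $X_m$ is deterministic, i.e., exactly when the ``exchangeable'' strategy is in fact an independent one.

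I expect the main obstacle to be conceptual rather than computational: it lies in step one --- cleanly identifying independent placement as the point-mass case of the de Finetti representation --- and in step three --- ensuring the comparison is made between the two feasible \emph{sets} (via $p_c(m)=\mathbb{E}[X_m]$) rather than between two fixed strategies. The analytic content is only convexity plus monotonicity of $1-(\cdot)$, and is routine. It would also be natural to record the correlation statement advertised in the introduction: for an exchangeable binary sequence $\cov(Y_{m_i},Y_{m_j})=\var(X_m)\ge0$, so the spatial content distribution under exchangeable placement is always nonnegatively correlated, and is strictly positively correlated exactly in the regime where it is strictly suboptimal.
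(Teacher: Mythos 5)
Your proof is correct and follows essentially the same route as the paper's: the core step in both is Jensen's inequality applied to the convex map $x\mapsto\exp(-\lambda_m\pi\Rdds x)$, giving $\mathbb{E}[\exp(-\lambda_m\pi\Rdds X_m)]\ge\exp(-\lambda_m\pi\Rdds\,\mathbb{E}[X_m])$ and hence a larger miss probability for any exchangeable placement. The additional care you take --- identifying independent placement with the point-mass mixing distribution and mapping each feasible exchangeable strategy to the feasible independent strategy $p_c(m)=\mathbb{E}[X_m]$ so that the comparison is between the two optimization problems rather than two fixed strategies --- is left implicit in the paper but is exactly what is needed to make the proposition precise.
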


\begin{proof}
Using the convexity of exponential, 
we can show that the hit probability of exchangeable placement in (\ref{eq:hitprob-exchangeable-simplified}) satisfies:
\begin{eqnarray}
\label{inequalityofmethods}
1-\Phit=\sum\nolimits_{m=1}^M{p_r(m)\mathbb{E}[\exp(-\lambda_m\pi {\Rdds}X_m)]}\nonumber\\
\geq\sum\nolimits_{m=1}^M{p_r(m)\exp(-\lambda_m\pi {\Rdds}\mathbb{E}[X_m])}.
\end{eqnarray}
%
%
From (\ref{inequalityofmethods}), the miss probability of the exchangeable cache placement model is higher than the miss probability of the independent placement. 
\end{proof}

\begin{prop} 
Negatively correlated placement performs better than independent placement in terms of the hit probability. 
\end{prop}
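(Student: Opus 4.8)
The plan is to reduce the statement to a single multivariate inequality about the joint law of the caching indicators seen by a typical receiver, holding the marginal caching probability fixed so that the negatively correlated design and the independent design obey the \emph{same} average cache constraint $\sum_{m=1}^{M} p_c(m)\le N$. Write the miss probability, for any placement, as $1-\Phit=\sum_{m=1}^{M}p_r(m)\sum_{k=0}^{\infty}\mathbb{P}(\mathcal{N}_m=k)\,P_{\rm miss}(m,k)$, where $P_{\rm miss}(m,k)$ is the probability that the $k$ D2D users covering the receiver (when it requests file $m$) all fail to store $m$. First I would observe that the coverage-number law $\mathbb{P}(\mathcal{N}_m=k)$ is the Poisson law of (\ref{pmf}), which depends only on $\lambda_m=\lambda p_r(m)$ and $\Rdd$ and hence is unchanged by the placement strategy; this is exactly the reduction already used for the exchangeable model. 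So the only quantity that differs between strategies is $P_{\rm miss}(m,k)=\mathbb{P}(Y_{m_1}=0,\dots,Y_{m_k}=0)$, and for independent placement $P_{\rm miss}(m,k)=(1-p_c(m))^k$.

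Next I would invoke the defining property of a negatively correlated placement: for every file $m$ and every finite collection of caches, the zero-indicators satisfy the negative upper-orthant bound
\begin{align*}
\mathbb{P}\Big(\bigcap_{i=1}^{k}\{Y_{m_i}=0\}\Big)\;\le\;\prod_{i=1}^{k}\mathbb{P}(Y_{m_i}=0)\;=\;(1-p_c(m))^{k},
\end{align*}
with equality for $k\in\{0,1\}$. Plugging this into the miss-probability expansion and using the common Poisson law and common request law $p_r(m)$ gives
\begin{align*}
&\sum_{m=1}^{M}p_r(m)\sum_{k=0}^{\infty}\mathbb{P}(\mathcal{N}_m=k)\,\mathbb{P}\Big(\bigcap_{i=1}^{k}\{Y_{m_i}=0\}\Big)\\
&\qquad\le\;\sum_{m=1}^{M}p_r(m)\sum_{k=0}^{\infty}\mathbb{P}(\mathcal{N}_m=k)\,(1-p_c(m))^{k},
\end{align*}
i.e.\ the miss probability of the negatively correlated model is at most that of the independent model at the same marginals; equivalently $\Phit$ is at least as large. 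Taking $p_c(m)=p^*_{c}(m)$ from (\ref{optimalcachingpmf}) (feasible for both, since the average constraint is identical) shows the optimal negatively correlated hit probability dominates the optimal independent one. For the $\mhc$ model the required orthant bound is inherited from negative association: retention of a point in a Mat\'{e}rn thinning is a coordinatewise-monotone function of the i.i.d.\ marks (a point is kept iff its mark is smallest in its $\Rdd$-ball, cf.\ Fig.~\ref{Matern}), so the retention indicators are negatively associated, and multiplying by the independent per-node storage choices preserves this; the bound then follows for arbitrary $k$.

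The main obstacle is exactly the last point: pairwise negative covariance of the $Y_{m_i}$'s is \emph{not} enough to get the product bound for $k\ge 3$, so ``negatively correlated placement'' has to be taken in the stronger (negative-association / negative-orthant) sense, and for the $\mhc$ construction this requires the monotonicity-in-the-marks argument above. Everything else --- the invariance of the coverage distribution, and the bookkeeping of the double sum over $m$ and $k$ --- is routine once that inequality is in hand.
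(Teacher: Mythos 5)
Your proof is correct and takes essentially the same route as the paper: the paper likewise treats the orthant bound $P_{\rm miss}(m,k)\le \mathbb{P}(Y_m=0)^k$ as the operative definition of ``negatively correlated'' placement and substitutes it into the double sum over $m$ and $k$, recovering the independent-placement hit probability $1-\sum_{m}p_r(m)e^{-p_c(m)\lambda_m\pi\Rdds}$ on the right-hand side. Your added caveat --- that pairwise negative covariance does not yield the product bound for $k\ge 3$, so the hypothesis must be read in the negative-association / negative-orthant sense (and verified for the Mat\'{e}rn construction via monotonicity in the marks) --- is a correct and worthwhile observation that the paper's one-line proof glosses over.
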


\begin{proof}
Note that for negatively correlated content placement, i.e., when $P_{\rm miss}(m,k)\leq \mathbb{P}(Y_m=0)^k$, 
\begin{eqnarray}
&\Phit=1-\sum\nolimits_{m=1}^M{p_r(m)\sum\nolimits_{k=0}^{\infty}{\mathbb{P}(\mathcal{N}_m=k)P_{\rm miss}(m,k)}}\nonumber\\
&\geq 1-\sum\nolimits_{m=1}^M{p_r(m)\sum\nolimits_{k=0}^{\infty}{e^{-\lambda_m\pi {\Rdds}}\frac{(\lambda_m\pi {\Rdds}\mathbb{P}(Y_m=0))^k}{k!}}},\nonumber
\end{eqnarray}
which is the hit probability for independent placement. 
\end{proof}

Negatively correlated spatial placement corresponds to a distance-dependent thinning of the transmitter 
process so that neighboring users 
are less likely to 
have matching contents.

\section{Mat\'{e}rn Hard Core ($\mhc$) Content Placement}
\label{hardcore}
We propose a content placement approach exploiting the spatial properties of Mat\'{e}rn's hard core ($\mhc$) model. $\mhc$ is constructed from the underlying $\PPP$ modeling the locations of the caches by removing certain points depending on the positions of the neighboring points and additional marks attached to the points. Each transmitter of the $\BM$ $V_{\rm BM}$ is assigned a uniformly distributed mark $U[0,1]$. A node $x\in\tilde{\Phi}$ is selected if it has the lowest mark among all the points in $B(x,\Rdd)$. A realization of the $\mhc$ p.p. is illustrated in Fig. \ref{Matern}. The proposed placement model is slightly different. Instead, for each file type, there is a distinct exclusion radius. 

We optimize the exclusion radii to maximize the total hit. The exclusion radius of a particular file ($r_m$ for file $m$) depends on the file popularity in the network, transmitter density and the cache size and satisfies $r_m<\Rdd$. Otherwise, once $r_m$ exceeds $\Rdd$, as holes would start to open up in the coverage for that content, the hit probability for file $m$ would suffer. We consider the following cases: (i) if the file is extremely popular, then many transmitters should simultaneously cache the file, yielding a small exclusion radius, and (ii) if the file is not popular, then fewer transmitters would be sufficient for caching the file, yielding a larger exclusion radius. Therefore, intuitively, we might expect the exclusion radius to decrease with increasing file popularity. However, our analysis shows that the exclusion radius is positively correlated with the file popularity, i.e., the most popular files are stored in a few caches with higher marginal probabilities unlike the files with low popularity that are stored with lower marginals.

Given the exclusion radius of the $\mhc$ model, a file should be placed at only one cache within a circular region. Hence, the caching probability of file $m$ at a typical transmitter is
\begin{eqnarray}
\label{cacheprobmatern}
p_{\rm cache}(m)\overset{(a)}{=}\mathbb{E}\Big[\frac{1}{1+C_m}\Big]=\frac{1-\exp(-\bar{C}_m)}{\bar{C}_m},
\end{eqnarray}
where $C_m$ is number of neighboring transmitters in a circular region of radius $r_m$, which is Poisson distributed with mean $\bar{C}_m=\lambda\pi r_m^2$ as $C_m\sim\Poisson(\bar{C}_m)$, and $(a)$ follows from the fact that the caching probability of a typical transmitter at origin is given by the probability that the node qualifies and gets the minimum mark value in its neighborhood. 

\begin{figure*}[t!]
\begin{minipage}[t]{.32\textwidth}
\centering
\includegraphics[width=\textwidth]{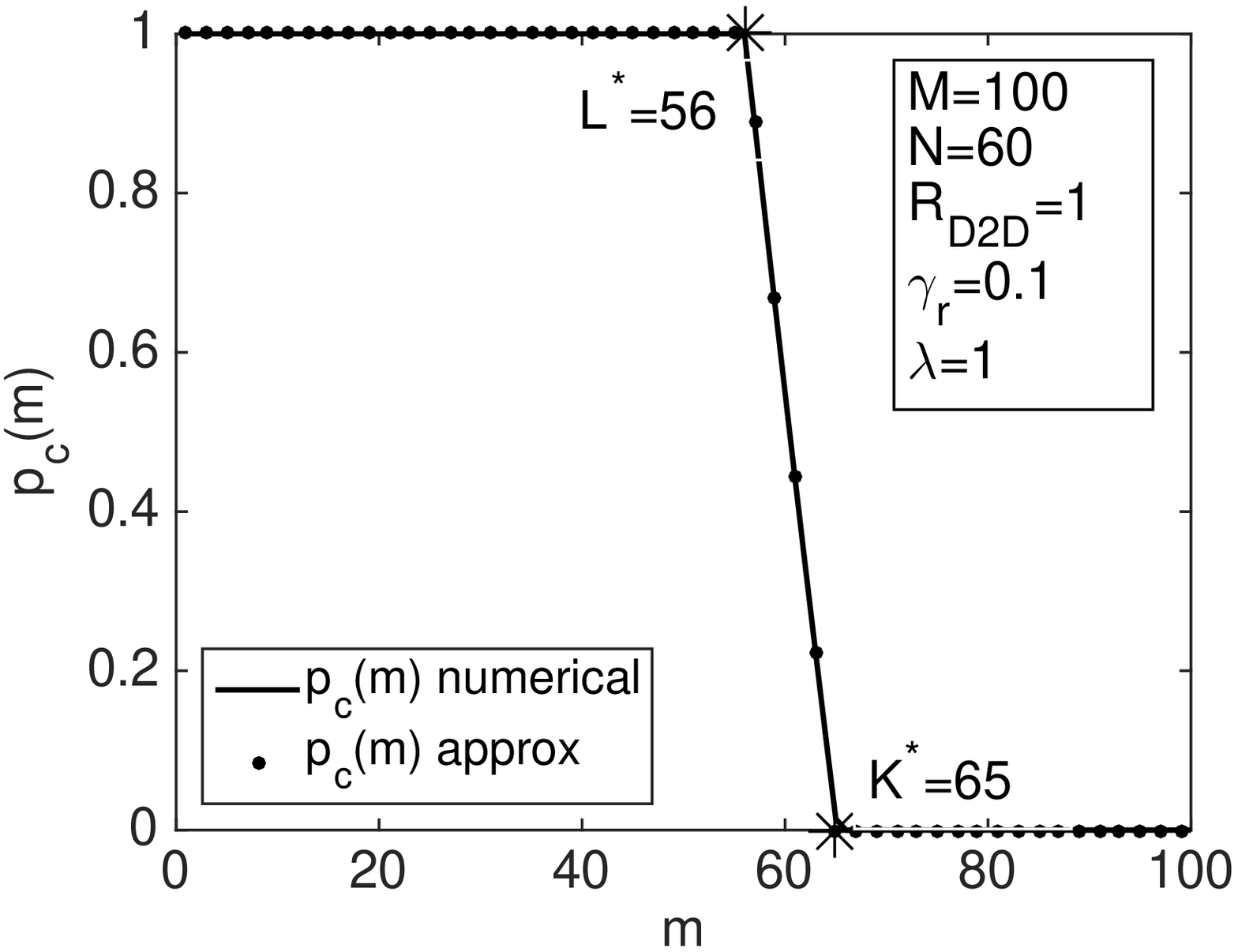}
\end{minipage}
\hfill
\begin{minipage}[t]{.32\textwidth}
\centering
\includegraphics[width=\textwidth]{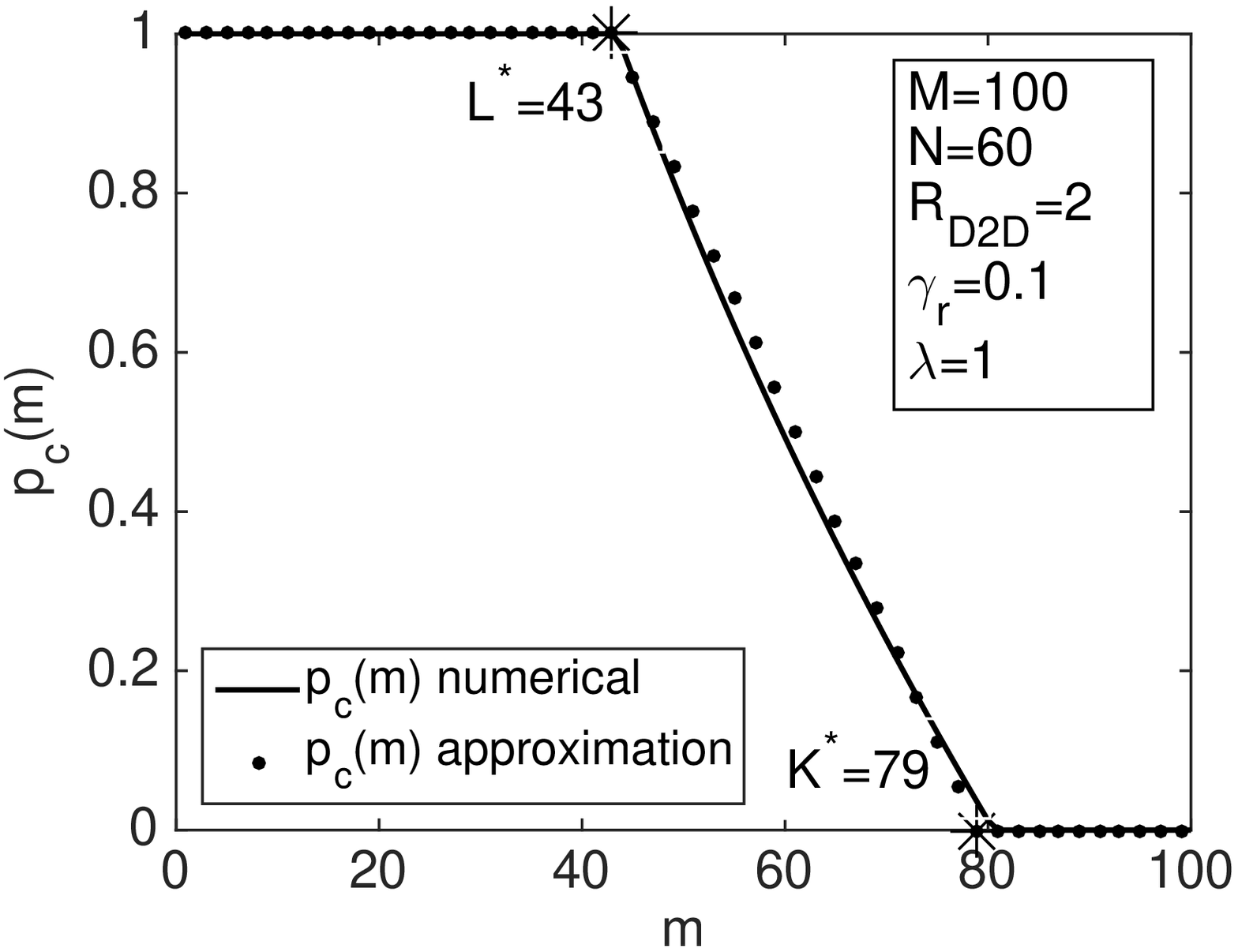}
\end{minipage}
\hfill
\begin{minipage}[t]{.32\textwidth}
\centering
\includegraphics[width=\textwidth]{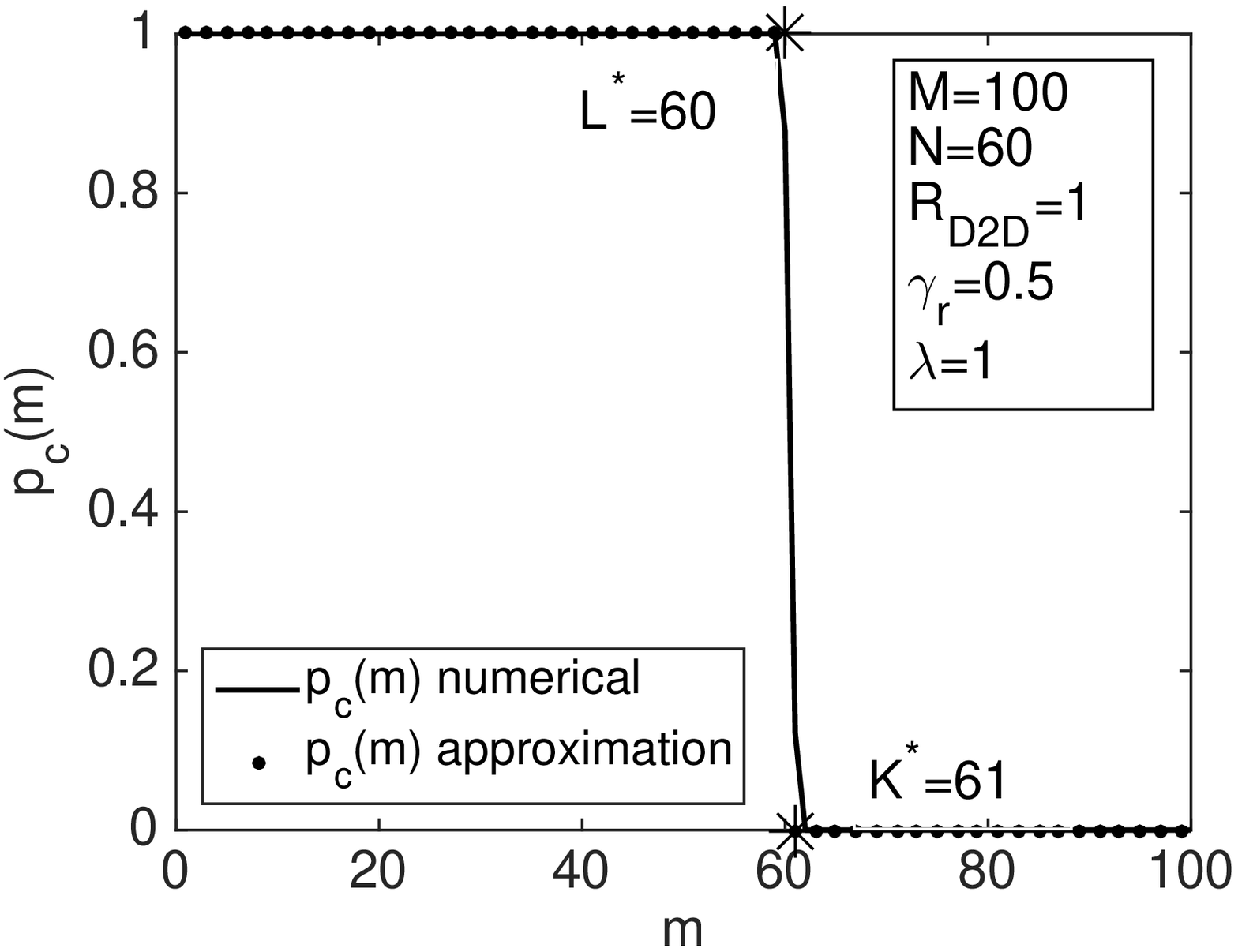}
\end{minipage}
\hfill
\caption{\small{Optimal cache placement (independently at each user) with more focused content popularity.}\label{OptCachingProb}}
\end{figure*}

Let $\tilde{C}_m$ be the number of transmitters containing file $m$ within a circular region of radius $r_m$. Since only one transmitter is allowed to contain a file within the exclusion radius, $\tilde{C}_m\in\{0,1\}$. Given the $\mhc$ model as above, there exists $\tilde{C}_m\in\{0,1\}$ transmitter having file $m$ in a region of size $\pi r_m^2$ with the probability of having one transmitter having file $m$ is 
\begin{eqnarray}
\label{probhavingoneTX}
\mathbb{P}(\tilde{C}_m=1)=1-\exp(-\bar{C}_m).
\end{eqnarray}
Hence, $\mathbb{E}[\tilde{C}_m]=\lambda_{\mhc}(m)\pi r_m^2=1-\exp(-\bar{C}_m)$ \cite[Ch. 2.1]{BaccelliBook1}, where $\lambda_{\mhc}(m)$ is the density of the $\mhc$ model. 

Consider the following optimization formulation:
\begin{equation}
\begin{aligned}
\label{eq:hitprob-matern}
\max_{p_{\rm cache}(m)} &\quad \Phit=\sum\nolimits_{m=1}^M{p_r(m)\mathbb{P}(\tilde{C}_m=1)}\\
\textrm{s.t.}
& \quad \sum\nolimits_{m=1}^M{p_{\rm cache}(m)}\leq N,
\end{aligned}
\end{equation}
which gives the maximum hit for the $\mhc$ content placement, where the constraint equation follows from that the probability that content $m$ is cached at a transmitter is equal to $p_{\rm cache}(m)$, and there are at most $N$ files to be stored in each cache.

We define the Lagrangian to find the solution as follows: $\mathcal{M}(\zeta)=\sum\nolimits_{m=1}^M{p_r(m)(1-e^{-\bar{C}_m})}+\zeta \big(\sum\nolimits_{m=1}^M{\frac{1-e^{-\bar{C}_m}}{\bar{C}_m}}-N\big)$, and taking its derivative with respect to $\bar{C}_m$, we get
\begin{align}
\frac{d \mathcal{M}(\zeta)}{d \bar{C}_m}
=p_r(m)e^{-\bar{C}_m}
+\zeta\Big(\frac{e^{-\bar{C}_m}\bar{C}_m-(1-e^{-\bar{C}_m})}{\bar{C}_m^2}\Big).\nonumber
\end{align}
Evaluating this at $\zeta=\zeta^*$, we obtain the simplified relation $p_r(m)\bar{C}_m^2+\zeta^*\left(\bar{C}_m-(\exp(\bar{C}_m)-1)\right)=0$, or equivalently, 
\begin{align}
\label{zetaopt}
\zeta^*=h_m(\bar{C}_m^*)={p_r(m)(\bar{C}_m^*)^2}/{(\exp(\bar{C}_m^*)-\bar{C}_m^*-1)},
\end{align} 
where 
the optimal solution $\zeta^*$ depends on the cache size $N$. Note that $\zeta^*$ is decreasing in $\bar{C}_m$, $\lim_{\bar{C}_m\to 0} \,\zeta^*=2p_r(m)$ and $\lim_{\bar{C}_m\to \infty} \,\zeta^*=0$. 
We 
determine the optimal value of $\bar{C}_m$ as 
\begin{eqnarray}
\bar{C}_m^*=\begin{cases}
0 \quad \text{if} \quad \zeta^*\geq 2p_r(m),\\
h_m^{-1}(\zeta^*) \quad \text{if} \quad \zeta^*< 2p_r(m).
\end{cases}
\end{eqnarray}

For very unpopular files with small $p_r(m)$, $\zeta^*$ satisfies $\zeta^*>2p_r(m)$ and hence, $\bar{C}_m^*=0$ and as the file popularity increases, $p_r(m)$ will be higher and $\zeta^*$ satisfies the relation $\zeta^*\leq 2p_r(m)$. Hence $\bar{C}_m^*$ increases with popularity and satisfies the relation $h_m^{-1}(\zeta^*)$. Thus, the average number of transmitters within the exclusion region, i.e., $\bar{C}_m^*$, is increasing by increasing the file popularity, and the exclusion radius for files with high popularity should be higher, which yields lower $p_{\rm cache}(\cdot)$ for popular files from (\ref{cacheprobmatern}).


As the storage size of the users 
drops, the exclusion region should increase to bring more spatial diversity into the model. Using the constraint in (\ref{eq:hitprob-matern}), as $N$ drops, a typical receiver won't be able to find its requested files and $\lim_{N\to 0} \,r_m=\infty$, which increases the volume fraction, i.e., increases the caching probability. When $N$ increases sufficiently, the transmitter candidates of the typical receiver will have any of the requested files and $\lim_{N\to \infty} \,r_m=0$, and because it is redundant to cache the files at all the transmitters, the exclusion radius should be made smaller to decrease the volume fraction and the caching probability. Thus, $N$ and $r_m$ have an inverse relationship.

\section{A Comparison of Content Placement Models}
\label{comparisonindependentmhc}
We compare the optimal solution $p^*_{c}(m)$ (\ref{optimalcachingpmf}) and our linear approximation (\ref{pcapproximate}) in Fig. \ref{OptCachingProb}. Modifying the $\DD$ parameters, we observe that our linear solution in (\ref{pcapproximate})  is indeed a good approximation of the optimal solution in (\ref{pcoptimal}). Keeping $\gamma_r$ constant, by increasing $\Rdd$, we expect to see a more diverse set of requests from the user, $L$ to decrease and $K$ to increase. Converse is also true. When we keep $\Rdd$ fixed, and increase $\gamma_r$, since the requests become more skewed towards the most popular files, the optimal strategy for the user is to store the most popular files in its cache. Keeping $\Rdd$ and $\gamma_r$ fixed, and increasing $\lambda$ has a similar effect as increasing $\Rdd$, however 
this is not plotted 
due to space limitations. From these plots, although it is clear that independent placement favors the most popular contents, it is not optimal to cache the most popular contents everywhere.

The performance of the independent content placement and the $\mhc$-based model is mainly determined by the cache size. Hence, the analysis boils down to finding the critical cache size that determines which model outperforms the other in terms of the hit probability under or above the critical size. Using the hit probabilities given in (\ref{eq:hitprob-opt}) and (\ref{eq:hitprob-matern}), respectively for the independent and $\mhc$ content placements, the required condition for which the $\mhc$ model performs better than the independent placement is $\sum\nolimits_{m=1}^M{p_r(m)\mathbb{P}(\tilde{C}_m=1)}\geq 1-\sum\nolimits_{m=1}^M{p_r(m)\sum\nolimits_{k=0}^{\infty}{\mathbb{P}(\mathcal{N}_m=k)(1-p_c(m))^k}}$. A sufficient condition for this to be valid is given as
\begin{eqnarray}
\label{condMHC}
\mathbb{P}(\tilde{C}_m=1)\geq 1-\sum\nolimits_{k=0}^{\infty}{\mathbb{P}(\mathcal{N}_m=k)(1-p_c(m))^k},
\end{eqnarray} 
equivalent to the condition $e^{-\lambda_m\pi r_m^2}\leq 
e^{-p_c(m)\lambda_m\pi {\Rdds}}$.

Now, we consider two regimes controlled by the cache size $N$. In the regime where $\mhc$ placement is better than the independent placement, using (\ref{condMHC}), $r_m$ is lower bounded as $\sqrt{p_c(m)}\Rdd \leq  r_m$, for all $m$, and the volume fraction is lower bounded by $1-\exp(-\lambda_m \pi p_c(m)\Rdds)$. Since a high exclusion radius is required for small cache size, $\mhc$ placement performs better than the independent placement for small cache size. When $r_m<\sqrt{p_c(m)}\Rdd$, the volume fraction is upper bounded by $1-\exp(-\lambda_m \pi p_c(m)\Rdds)$. In this case, the file exclusion radii are very small for files with very low popularity, implying that the cache size should be sufficiently large, for which case independent placement is better than $\mhc$ placement. The cache hit probability trends of the independent placement in \cite{Blaszczyszyn2014}, and the $\mhc$ placement model with respect to the cache size are shown in Fig. \ref{fig-hit-MHC}.
\begin{figure}[t!]
\centering
\includegraphics[width=0.32\textwidth]{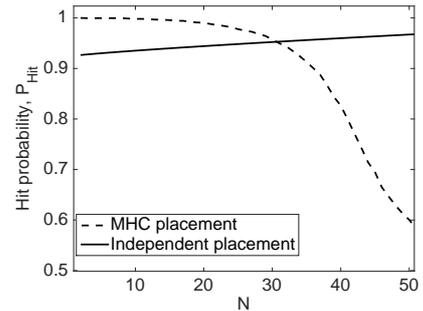}
\caption{\small{Cache hit probabilities of the independent and $\mhc$ models.}\label{fig-hit-MHC}}
\end{figure}

\bibliographystyle{IEEEtran}
\bibliography{D2Dreferences}

\end{document}